\documentclass[12pt]{iopart}

%Uncomment next line if AMS fonts required
%\usepackage{iopams}
\expandafter\let\csname equation*\endcsname\relax
\expandafter\let\csname endequation*\endcsname\relax
\usepackage{amssymb}
\usepackage{amsfonts}
\usepackage{amsmath}
\usepackage{mathtools}
\usepackage{amsthm}
\usepackage{subcaption} % packages for sub-figures
\captionsetup{compatibility=false}
\usepackage{color}
\usepackage[normalem]{ulem}

\newtheorem{theorem}{Theorem}

\begin{document}

\title{Unique superdiffusion induced by directionality in multiplex networks}

\author{Xiangrong Wang,$^{1,2}$ Alejandro Tejedor,$^{3,4}$ Yi Wang,$^{1,2}$ Yamir Moreno$^{5,6,7\ast}$}

\address{$^{1}$Institute of Future Networks, Southern University of Science and Technology, Shenzhen, China}
\address{$^{2}$Research Center of Networks and Communications, Peng Cheng Laboratory, Shenzhen, China}
\address{$^{3}$Department of Science and Engineering, Sorbonne University Abu Dhabi, Abu Dhabi, United Arab Emirates}
\address{$^{4}$Department of Civil and Environmental Engineering, University of California Irvine, Irvine, CA, USA}
\address{$^{5}$Institute for Biocomputation and Physics of Complex Systems,University of Zaragoza, Zaragoza, Spain}
\address{$^{6}$Department of Theoretical Physics, University of Zaragoza, Zaragoza, Spain}
\address{$^{7}$ISI Foundation, Turin, Italy}
\ead{yamir.moreno@gmail.com}

%\vspace{10pt}
\begin{indented}
\item[] 09 September 2020
\end{indented}

\begin{abstract}
The multilayer network framework has served to describe and  uncover a number of novel and unforeseen physical behaviors and regimes in interacting complex systems. However, the majority of existing studies are built on undirected multilayer networks while most complex systems in nature exhibit directed interactions. Here, we propose a framework to analyze diffusive dynamics on multilayer networks consisting of at least one directed layers. We {\color{black} rigorously} demonstrate that directionality in multilayer networks can fundamentally change the behavior of diffusive dynamics: from monotonic (in undirected systems) to non-monotonic diffusion with respect to the interlayer coupling strength. Moreover, for certain multilayer network configurations, the directionality can induce a unique superdiffusion regime for intermediate values of the interlayer coupling, wherein the diffusion is even faster than that corresponding to the theoretical limit for undirected systems, i.e., the diffusion in the integrated network obtained from the aggregation of each layer. We theoretically and numerically show that the existence of superdiffusion is fully determined by the directionality of each layer and the topological overlap between layers. We further provide a formulation of multilayer networks displaying superdiffusion. Our results highlight the significance of incorporating the interacting directionality in multilevel networked systems and provide a framework to analyze dynamical processes on interconnected complex systems with directionality.
\end{abstract}

%
% Uncomment for keywords
%\vspace{2pc}
%\noindent{\it Keywords}: XXXXXX, YYYYYYYY, ZZZZZZZZZ
%
% Uncomment for Submitted to journal title message
%\submitto{\JPA}
%
% Uncomment if a separate title page is required
%\maketitle
% 
% For two-column output uncomment the next line and choose [10pt] rather than [12pt] in the \documentclass declaration
%\ioptwocol
%

\section{Introduction}
Multilayer networks provide a proper mathematical representation of complex systems with different types of interactions, e.g., social interactions among individuals across different social platforms such as Facebook and Twitter, and enable the understanding of dynamics acting on or evolving within those systems \cite{kivela2014multilayer,aleta2019multilayer}. A number of dynamical processes are studied on the framework of multilayer networks: for example, innovation/information diffusion \cite{Myers2012}, the control of formation and task allocation of swarms \cite{vicsek1995novel,berman2009optimized,prorok2017impact}, synchronization of oscillators \cite{saber2003consensus}, the understanding of functional brain connectivity \cite{abdelnour2014network}, and many others. More specifically, recent studies based on a multilayer network representation have revealed unforeseen dynamical regimes and behaviors, such as new spreading regimes when more than one diseases in spreading in a given population \cite{de2016physics,de2018fundamentals,sahneh2015exact}, and enhanced the stability of synchronized states \cite{del2016synchronization}. Moreover, conversely from building up a multilayer structure, gradually dismantling a multilayer network triggers an abrupt transition \cite{cozzo2019layer}, suggesting a non-additive effect of a multilayer structure from the integration of its layers. In terms of diffusion dynamics, multilayer network studies revealed that for any multilayer configuration, the diffusive behavior of the overall system (multilayer network) could be tuned to be faster than that of the slowest layer for significant values of interlayer diffusivity (in comparison to intralayer diffusivity). More surprisingly, but only for certain multilayer network configurations, the overall system can exhibit a superdiffusive behavior for large values of interlayer diffusivity, where the diffusive dynamics on the multilayer network is faster than in any of the individual layers, when those are considered in independently \cite{gomez2013diffusion,sole2013spectral,tejedor2018diffusion,cencetti2019diffusive}.

Most of the studies made in the field of multilayer networks adopt undirected structures, i.e., the interactions represented by both interlayer and intralayer links are assumed to be undirected. However, most real world systems are inherently directed. Examples are the World Wide Web, in which hyperlinks run in one direction from one Webpage to another; food webs, in which energy flows from prey to predator; phone call networks; metabolic networks; citation networks; social networks of followers and followee; gene regulation networks, to name a few. Additionally, some dynamical processes are governed by physical laws and inherently evolve in a directional manner despite of the underlying undirected connectivity topology, like current/information flow from high voltage to low voltage, virus transmissions from infected to susceptible individuals. Moreover, multi-leveled unidirectional interactions are commonly found in interacting systems. For example, a gene regulation network aims to properly characterize both protein-DNA and protein-protein directed interactions between genes and proteins \cite{hecker2009gene}. Therein, multilayer networks with directionality provide a realistic representation for such systems consisting of multilevel and directed interactions. 

Multilayer networks where {\color{black} at least one of the} layers is directed give rise to new theoretical challenges in analyzing their dynamical behaviors. As diffusion dynamics on undirected multilayer networks is fully characterized by the spectrum of network matrices, like supra-Laplacian \cite{gomez2013diffusion}, network directionality leads to nonsymmetrical graph matrices, which mostly features complex and non-orthogonal spectrum. Consequently, spectrum related analysis on undirected multilayer networks might be proven as invalid or inaccurate. Few recent studies on directed multilayer networks reveal new and unexpected physical behaviors \cite{tejedor2018diffusion,wang2019directionality}, as well as increased sensitivity to structural changes \cite{zhang2018altering}. 

The contribution of this paper is to provide a theoretical framework for diffusive dynamics on multilayer networks with directed layers. We show that although certain undirected multilayer networks can result in enhanced diffusive (faster than the slowest layer) or superdiffusive (faster than any of the layers) regimes for large values of the interlayer coupling, the diffusive rate of the overall system never exceeds the diffusion on the equivalent aggregated system (the direct sum of each layer). We also show that certain directed multilayer networks can also show enhanced diffusive and superdiffusive regimes for large values of interlayer coupling. Remarkably, directed multilayer networks, differently from their undirected counterparts, can exhibit a unique superdiffusive behaviour for intermediate values of coupling, where diffusion dynamics are even faster than the corresponding aggregated system. This unique superdiffusion regime only emerges when certain structural conditions are satisfied. We analytically and numerically uncover conditions driving the unique superdiffusion in directed multilayer networks. In addition, we propose a model to configure directed multilayer networks that achieves the unique superdiffusion with tunable magnitude.  

The paper is organized as follows. Diffusive dynamics on multilayer networks with direction is described in the Model section \ref{sec:model}. The section \ref{sec:results} of Results presents the conditions for diffusion regimes below the diffusivity in the corresponding aggregated system in section \ref{sec:subdiffusion} and for the superdiffusion regime above the diffusivity in the aggregated system in section \ref{sec:superdiffusion}, followed by a model for the construction of a multilayer network with superdiffusion in section \ref{sec:multilayer_design}. Section \ref{sec:conclusions} concludes the work.
\section{Model}\label{sec:model}
Multilayer networks consist of nodes interacting both within the same layer and across different layers via intralinks and interlinks which encodes multiple types of interactions. In this study, we focus on a particular type of multilayer networks called multiplex networks, characterized by layers consisting of the same set of nodes, but possibly different connectivity (layer topology); and layers interacting with each other only via counterpart node. The topological structure of the multiplex is described by the so-called supra-adjacency matrix  $ A = \left(a_{ij}\right) $, which is a block diagonal matrix. Each block is a $N \times N$ matrix, where $N$ is the number of nodes per layer. Each diagonal block corresponds to the adjacency matrix of a layer (intra-layer connectivity, and therefore an entry $ a_{ij}=1 $ in such a  block corresponds to an interlayer link), while the off-diagonal blocks, given the definition of a multiplex network, are just $N \times N$ identity matrices, representing the inter-layer links between replica nodes across layers. Note that a directed multiplex, is a multiplex where at least one of the diagonal blocks is not symmetric.
 
Let $ Q $ be the corresponding Laplacian matrix, defined as $ Q = D -A $ where $ D = \text{diag}\left(d_{i}\right) $ and $ d_i = \sum_{i=1}^{N}a_{ij} $ denoting the out-going degree of a node $ i $. We consider without loss of generality a multiplex network consisting of two directed layers and interconnected via links of weight $ p \geq 0 $, the corresponding Laplacian matrix $ Q $ can be written in a block form as
\begin{equation}\label{eq:def_Laplacian}
Q = \begin{bmatrix}
Q_1 +pI & -pI \\ -pI &Q_2 +pI \\
\end{bmatrix}
\end{equation}
where $ Q_i = D_i -A_i, \ i = 1, 2 $, denotes the Laplacian matrix for the directed graph in each layer and $ pI $ encodes the weighted interconnections across layers. The above defined Laplacian matrix for multilayer networks with directed layers is non-symmetrical, which might lead to complex rather than real spectrum. As the Laplacian satisfies a zero row sum, i.e., $ Qu=0 $ with $ u $ denoting the all one vector, value $ 0 $ is an eigenvalue with right eigenvector $ u $ and left eigenvector denoted as $ y $. For a nonsymmetric matrix, left and right eigenvectors are in general not the same.

To analyze the effect of network directionality on dynamical processes, we consider the standard diffusive dynamics on multilayer networks with directed layers and determine the convergence rate to a fixed point solution. Let $ x_i(t) $ denotes the state of a node $ i $ at time $ t $. Changes of the state $ x_i(t) $ with respect to time is governed by $ \dot{x}_i(t) = -d_{i}x_i(t)+\sum_i a_{ij}x_i(t) $. The diffusive dynamics on a multilayer network can be written in a matrix form as
\begin{equation}
\frac{dx(t)^T}{dt} = -x(t)^TQ
\end{equation}
where $()^T$ denotes transposition, and $ x(t) $ is the vector encoding the state of all nodes at time $ t $. The solution of the above governing equation follows $ x(t)^T = x(0)^Te^{-Qt} $. The fixed point $ x^{*T} $ at which $\dot{x}^{*T} = 0 $ is calculated employing the conservation law $ x(0)^Tu = x(\infty)^Tu $, where all one vector $ u $ is the right eigenvector corresponding to eigenvalue $ 0 $ of the Laplacian matrix $ Q $. Let $ y $ denote the left eigenvector of $ Q $ corresponding to eigenvalue $ 0 $. Combining the conservation law $ x(0)^Tu=x^{*T}u $ and $ y^TQ=0 $ yields the solution of fixed point
\begin{equation}\label{eq:steady_state}
x^{*T} = y^T\frac{x(0)^Tu}{y^Tu}
\end{equation} 
We show in \ref{appendix:convergence} a convergence to the fixed point in Eq. (\ref{eq:steady_state}) is guaranteed for a connected directed graph. The convergence rate to the fixed point $ x^* $ is characterized by the second smallest real part \cite{tejedor2018diffusion}, denoted as $ \text{Re}\lambda_{2}(Q)  $, of the eigenspectrum of the Laplacian matrix. 
\subsection{General analysis for diffusive behavior}\label{sec:super_sub_diffusion}
To analyze the convergence behavior of diffusive dynamics on multiplex networks with directed layers, we study the spectrum of the governing Laplacian matrix and compare with the integrated system. The characteristic polynomial for the Laplacian matrix $ Q $ in Eq. (\ref{eq:def_Laplacian}) can be calculated by, applying the Schur complement theorem on the block Laplacian matrix, 
\begin{equation}\label{eq:characteristic_polynomial}
\det\left(\lambda I -Q\right) 
=\det\left( \left(\lambda I -\frac{Q_1+Q_2}{2}\right)\left((\lambda-2p) I -\frac{Q_1+Q_2}{2}\right)-\Delta Q \right)
\end{equation}
where $ 4 \Delta Q = \left(Q_1-Q_2\right)^2-2\left(Q_1Q_2-Q_2Q_1\right) $. The eigen-pair of eigenvalue $ 2p $ and eigenvector  $ \left(u, -u\right) $ holds for directed multilayer networks, which is previously found in undirected multilayered networks \cite{sahneh2015exact,wang2019structural}. The joint effect of the Laplacian matrices $ Q_1 $ and $ Q_2 $ for each layer, encoded in the matrix $\Delta Q$, determines the deviation of the convergence rate of the whole system from that of the integrated multilayer. 

Depending on relations between the diffusive rate of the system as a whole and that of the integrated system, we refer two complementary regimes, superdiffusive and non-superdiffusive, as follows: (i) if $ \text{Re}\lambda_{2}(Q) $ of the overall system is greater than $ \lambda_{2}((Q_1+Q_2)/2) $ of the integrated system, we refer to as superdiffusive, (ii) otherwise as non-superdiffusive.
\begin{equation}\label{eq:def_superdiffusio}
\text{Re}\lambda_{2}(Q) \begin{cases}
> \lambda_{2}((Q_1+Q_2)/2) & \text{Superdiffusive}\\ 
\leq \lambda_{2}((Q_1+Q_2)/2) & \text{Non-superdiffusive}
\end{cases}
\end{equation}
We mention that the superdiffusive regime defined in this work slightly  differs  from  the definition in literature \cite{gomez2013diffusion,tejedor2018diffusion}, where superdiffusion implies a faster diffusion on the overall system than the diffusion on the fastest layer. Here, we mainly focus on diffusive behavior comparing the overall system and the correspondent integrated system, given the later serves as the asymptotic diffusion as intercoupling strength goes to infinity and, most importantly, serves as a theoretical upper limit for  diffusion on undirected multiplex networks (as shown in the following Eq. (\ref{eq:lambda2_relation_undirected})).

Under the special case of $ Q_1 = Q_2 $, solving Eq. (\ref{eq:characteristic_polynomial}) yields $ \text{Re}\lambda_{2}(Q) = \min\left(\lambda_{2}((Q_1+Q_2)/2), \ 2p \right) $ and therefore only the non-superdiffusive regime is exhibited. General cases of $  Q_1 \neq Q_2  $ result in possibilities of both non-superdiffusive and superdiffusive regimes. Figure \ref{fig:toy_model_super_sub_diffusion} exemplifies three typologies including (a) coupled directed and undirected layers, (b) coupled directed layers with the same direction, and (c) coupled directed layers with the opposite direction. Fig. \ref{fig:toy_model_super_sub_diffusion}a,c show only non-superdiffusive regimes and Fig. \ref{fig:toy_model_super_sub_diffusion}b exhibits superdiffusion.

Given the rich dynamical behavior of diffusion in directed multiplex, it is of paramount importance to identify the key structural properties underpinning each of the observed regimes. In this study, we mainly focus on identifying the underlying conditions driving the non-superdiffusive and superdiffusive regimes employing the principal submatrix approach and the Cauchy interlacing theorem \cite{cauchy1829equationa} (or the Poincare separation theorem). In addition, we employ the eigenvalue property of a normal matrix, which satisfies $ XX^T =X^TX $. For a normal matrix, the real parts of eigenvalues are the eigenvalues of the Hermitian part $ \text{Re}X = (X+X^T)/2 $ of the matrix $ X $. 

The Laplacian matrix is similar to the following matrix $ \widetilde{Q} $ in a transformed basis
\begin{equation}
\widetilde{Q} = \frac{1}{\sqrt{2}}\begin{bmatrix}
I &  I \\ -I &  I \\
\end{bmatrix}Q\begin{bmatrix}
I &  -I \\ I &  I \\
\end{bmatrix}\frac{1}{\sqrt{2}}
\end{equation}
which can be simplified as 
\begin{equation}
\widetilde{Q} =  \begin{bmatrix}
\frac{Q_1+Q_2}{2} & \frac{Q_2-Q_1}{2}\\
\frac{Q_2-Q_1}{2} & \frac{Q_1+Q_2}{2}+2pI
\end{bmatrix}
\end{equation}
The integrated Laplacian $ \frac{Q_1+Q_2}{2}  $ appears as a principal submatrix of the transformed Laplacian $ \widetilde{Q}  $ of the whole system. In addition, eigenvalues of the matrix $ \widetilde{Q} $ are the same with eigenvalues of $ Q $ due to the similarity relation between $ Q $ and $ \widetilde{Q} $. Applying the Cauchy interlacing theorem on the transformed matrix $ \widetilde{Q} $ and the integrated system $\frac{Q_1+Q_2}{2} $ as a principal submatrix, undirected multilayer networks always satisfy the following interlacing relation
\begin{equation}\label{eq:lambda2_relation_undirected}
\lambda_2\left(Q\right)\leq \lambda_2\left(\frac{Q_1+Q_2}{2}\right)
\end{equation}
Hence, there never occurs a superdiffusion in undirected multilayer networks, regardless of the strength of intercoupling. Eq. (\ref{eq:lambda2_relation_undirected}) serves as a theoretical limit and implies that although multilayer structure of undirected layers enhances the less diffusive layer, the diffusive rate is upper bounded by the integrated system. 

\section{Results}\label{sec:results}

\subsection{Conditions for the non-superdiffusive regime\label{sec:subdiffusion}}
For multilayer networks with directed layers, the conditions for non-superdiffusive and superdiffusive regimes are translated into conditions when the Cauchy interlacing theorem between a non-symmetric matrix and its principle submatrix is satisfied. Starting from the special case of coupled identical layers, it follows
\begin{equation}
\text{Re}\lambda_2\left(Q\right)=\lambda_2\left(\frac{Q_1+Q_2}{2}\right)
\end{equation}
which implies a non-superdiffusive regime for directed multilayer networks with identical directed layers for $ p \geq \lambda_2\left((Q_1+Q_2)/2\right) /2$. When the transformed Laplacian $ \widetilde{Q}$ and all the principal matrices including the integrated matrix $ (Q_2+Q_1)/2 $ are normal, the interlacing is also satisfied. Because for a principally normal matrix, all eigenvalues of matrices $ Q $ and $ (Q_2+Q_1)/2 $ lie on the same complex line with an angle $ \theta $ in the complex plane \cite{thompson1968principal,sherman2013principally}. For a principally normal matrix, the partial order of generalized interlacing on a complex plain is obtained:
\begin{equation}
|\lambda_2\left(Q\right)| \leq \left|\lambda_2\left(\frac{Q_1+Q_2}{2}\right)\right|
\end{equation}
As the real part of eigenvalue reads $ |\lambda_2\left(Q\right)| e^{i\theta} $ and $ \theta $ is the same for all eigenvalues, it implies an interlacing between real parts of eigenvalues and accordingly a non-superdiffusive regime for multilayer networks with principally normal Laplacian matrices. 

When the requirement of normality of all principal submatrices is relaxed to the normality of a single principal submatrix, generalized interlacing of lexicographic order can be applied to $ \widetilde{Q} $ and $  (Q_2+Q_1)/2 $. For a normal Laplacian matrix and a normal principal submatrix $ \frac{Q_1+Q_2}{2} $, we have that
\begin{equation}\label{eq:lexicographic_interlacing}
\lambda_2\left(Q\right)\leq_\theta \lambda_2\left(\frac{Q_1+Q_2}{2}\right)
\end{equation}
where $ \leq_\theta $ represents the lexicographic order \cite{fan1957imbedding} characterized by the positive cone $ H \coloneqq \{a+bi \colon a > 0 \ \text{or} \ a=0 \ \text{and} \ b >0\} $. Eq. (\ref{eq:lexicographic_interlacing}) means that $e^{-i\theta}\lambda_2\left(\frac{Q_1+Q_2}{2}\right)-e^{-i\theta}\lambda_2\left(Q\right)  $ lies in a positive cone, which implies $ \lambda_2\left(\frac{Q_1+Q_2}{2}\right) \geq \text{Re}\lambda_2\left(Q\right) $. Therefore, directed multilayer networks with normal Laplacian matrix and normal principal submatrix $ \frac{Q_1+Q_2}{2} $ exhibit only the non-superdiffusive regime.

\subsection{Conditions for the superdiffusive regime \label{sec:superdiffusion}}
Diffusion on undirected multilayer never exceeds the diffusion of integrated part. However, certain multilayer networks with directed layers break the constraint and exhibit a unique superdiffusion. In this subsection, we analyze conditions for the occurrence of such superdiffusion. For the generalized interlacing theorem to hold, it requires both the whole system and the integrated system to be normal or simultaneously close to normal. When the condition is relaxed such that the integrated system is normal or close to normal, while the whole system deviates from normality, it might occur with superdiffusion. We provide analytical evidences by firstly relating the real part of eigenvalues and eigenvalues of the Hermitian part of the original matrix, which follows 
\begin{equation}
\sum_{j=N-k}^{N}\text{Re}\lambda_{j} = \sum_{j=N-k}^{N}x_j^*\text{Re}\widetilde{Q}x_j \leq \sum_{j=N-k}^{N}\lambda_{j}\left(\text{Re}\widetilde{Q}\right)
\end{equation}
where $ \text{Re}\widetilde{Q}$ denotes the Hermitian part of $\widetilde{Q}$. Because $ \sum_{j=1}^{N}\text{Re}\lambda_{j} = \sum_{j=1}^{N}\lambda_{j}\left(\text{Re}\widetilde{Q}\right)$ and $ \lambda_{1}\left(\text{Re}\widetilde{Q}\right)=0 $ for a normal subgraph, we have that $ \text{Re}\lambda_{2}\left(Q\right) \geq  \lambda_{2}\left(\text{Re}\widetilde{Q}\right) $. The equality is achieved if the matrix is normal. Therefore, if the integrated system is normal, it implies $ \lambda_{2}\left(\frac{Q_1+Q_2}{2}\right) = \lambda_{2}\left(\text{Re}\frac{Q_1+Q_2}{2}\right) $. 
 
Secondly, we analyze the superdiffusion condition by relating eigenvalues of the Hermitian part of the integrated system and Hermitian part of the whole system. On one hand, observing that the Hermitian part of the integrated system coincides with the principal submatrix of the Hermitian part of the whole system, we have that $ \lambda_{2}\left(\text{Re}\widetilde{Q}\right) \leq \lambda_{2}\left(\text{Re}\frac{Q_1+Q_2}{2}\right)  $ due to the Cauchy interlacing theorem for Hermitian matrices. On the other hand, $ \lambda_{2}\left(\text{Re}\widetilde{Q}\right) \geq \lambda_{2}\left(\text{Re}\frac{Q_1+Q_2}{2}\right) + \lambda_{2}\left(\text{Re}\frac{Q_2-Q_1}{2}\right)  $ due to Ky Fan majorization \cite{Fan1950On} theorem $ \lambda\left(A+B\right) \prec \lambda\left(A\right)+\lambda\left(B\right) $. Therefore, if the difference $ \lVert \text{Re}\left(Q_2-Q_1\right) \rVert_2$ is bounded, it leads to $ \lambda_{2}\left(\text{Re}\widetilde{Q}\right) \approx \lambda_{2}\left(\text{Re}\frac{Q_1+Q_2}{2}\right) $.

A superdiffusion is thus established
\begin{equation}
\text{Re}\lambda_{2}\left(Q\right) \gtrapprox  \lambda_{2} \left(\frac{Q_1+Q_2}{2}\right)
\end{equation}
if (i) the integrated system is or close to normal and (ii) the structure difference of coupled layers $ \text{Re}Q_2-\text{Re}Q_1$ is negligibly small.

The equality of $ \lambda_{2}\left(\text{Re}\widetilde{Q}\right) = \lambda_{2}\left(\text{Re}\frac{Q_1+Q_2}{2}\right) $ is reached for a multilayer network consisting of a directed graph coupled with its reversed graph, $ Q_2=Q_1^T $. In this case, it follows $  \text{Re}Q_2 = \text{Re}Q_1$ and there exists a superdiffusion of $ \text{Re}\lambda_{2}\left(Q\right) >  \lambda_{2} \left(\frac{Q_1+Q_2}{2}\right) $.

It was reported that the superdiffusion occurs when a fully connected network coupled with a network with a certain level of directionality, quantified by a metric called network directionality index \cite{tejedor2018diffusion}. Here, we prove in \ref{appendix:NDI} an open problem of the normative property of the Network directionality index (NDI). In addition, we show that the condition of sufficient directionality as quantified by NDI is in line with the derived normality conditions in this study (as shown in Figure \ref{fig:NDI_nonnormality}). The derived condition for superdiffusion in this work, i.e., a close to normal and a deviation from normal (upper left panel in Fig. \ref{fig:NDI_nonnormality}) for the integrated and overall multiplex, respectively, is equivalent to the condition of a sufficient level of NDI. The nonnormality condition for the non-superdiffusive regime (lower right panel in Fig. \ref{fig:NDI_nonnormality}) also agrees with the NDI condition. However, the nornormality condition is widely applicable for multiplex networks with any number of directed layers, compared to a single directed layer dealt by NDI condition.

\subsection{Formulation of multilayer networks with superdiffusion\label{sec:multilayer_design}}
To interpret the normality conditions for superdiffusion, we propose a model to construct multiplayer networks exhibiting superdiffusion with a tunable level of magnitude. When the coupled layers have identical Hermitian part of Laplacian, the minimization of the normality level of the integrated structure is translated into
\begin{equation}\label{eq:min_normality_integrated}
\min ||Q_1Q_2^T - Q_2^T Q_1||_2
\end{equation}

with proof in \ref{appendix:minnormality}. Achieving superdiffusion by minimizing Eq. ({\ref{eq:min_normality_integrated}}) is in general dauntingly difficult, even for undirected graphs of symmetrical Laplacian matrices. The simultaneous diagonalization, which always leads to the commutativity of matrices $ Q_1 $ and $ Q_2^T $, i.e., $ Q_1Q_2^T = Q_2^T Q_1 $, was posted as an open problem by Hiriart-Urruty \cite{hiriart2007potpourri} more than a decade ago. Extensive research is performed due to the close association with quadratically constrained quadratic programming and certain advances are made for real symmetrical matrices \cite{jiang2016simultaneous}. However, results on nonsymmetrical matrices are extremely scarce. Nonetheless, we propose a construction model to generate a directed multilayer network by constructing one matrix (e.g., $ Q_2 $) from a given matrix (e.g., $ Q_1 $), such that superdiffusion occurs.

We decompose the construction model into two parts: (i) constructing a graph $ G_2 $ from the replication of graph $ G_1 $ but after reversing the direction of directed links; (ii) preserving the out-degree of each node in graph $ G_1 $ such that graph Laplacian $ Q_2 $ has the same diagonal elements as $ Q_1 $. To this end, we propose to reverse links in a cycle fashion, i.e., simultaneously reversing the direction of all links involved in a cycle. In addition, the number of links reversed in a cycle fashion is closely associated with the magnitude $ \text{Re}\lambda_{2}\left(Q\right)-\lambda_{2} \left(\frac{Q_1+Q_2}{2}\right) $ of superdiffusion. For two directed graphs $ G_1 $ and $ G_2 $, we define a variable $ q $ as direction overlap, the fraction of links having the same direction and calculated by $ q = \frac{\lvert \mathcal{E}(G_1) \cap \mathcal{E}(G_2)\rvert }{\lvert \mathcal{E}(G_1) \cup \mathcal{E}(G_2)\rvert } $ or in a matrix form as
\begin{equation}
q = \frac{\sum_{i,j}\left(A_1\right)_{ij}\left(A_2\right)_{ij}}{\sum_{i,j}\left(A_1\right)_{ij}+\sum_{i,j}\left(A_2\right)_{ij}-\sum_{i,j}\left(A_1\right)_{ij}\left(A_2\right)_{ij}}
\end{equation}
The denominator normalizes the direction overlap, in which $ q=0 $ corresponds to $ Q_2 = Q_1^T $ and $ q = 1 $ corresponds to $ Q_1 = Q_2 $. The magnitude of superdiffusion can thus be tuned by reversing different fraction of links, parameterized by $ 1-p $.

Figure \ref{fig:lower_bound_direction_overlap} shows diffusive behaviors on constructed multiplex networks consisting of (a) a circulant directed graph and the duplicate digraph with $ 1-q $ reversed links and (b) a real-world genetic and protein interactions network of the Epstein–Barr virus \cite{stark2006biogrid,de2015muxviz} and the duplicate digraph with $ 1-q $ reversed links. The coupled digraph and its duplicate without link reversing ($ q=1 $) displays a non-superdiffusive behavior. For $ 0 \leq q < 1$, the constructed multiplex built upon both synthetic and real-world directed graph shows superdiffusion regimes. In addition, a higher fraction of reversed links (a smaller $ q $) results in a higher magnitude of superdiffusion.

\section{Conclusions}
\label{sec:conclusions}
In this paper, we study the diffusive dynamics on multiplex networks consisting of at least one directed layers. Though multilayer structure of undirected layers enhances diffusion, it is restricted to a theoretical limit by the aggregation of each layer. We show that ubiquitous property of directionality in each layer could break this limit and achieve a unique superdiffusion regime, wherein diffusion is even faster than the corresponding aggregated system. We analytically and numerically uncover that the unique superdiffusion is driven by the directionality and the underlying structure of layers rather than how strongly layers are coupled. In particular, diffusive behaviors are associated with the non-normality level of the integrated and the overall system: when both the integrated and the overall system are normal, it is assured that the multiplex exhibits only a non-superdiffusive regime; when the integrated system is normal or close to normal while the whole system is deviated from normal, it exhibits the unique superdiffusive regime. Additionally, we provide a model to construct multiplex networks, achieving superdiffusion with a tunable level of magnitude. We show that directionality induces a unique superdiffusion, a regime not existed in the counterpart of undirected multilayer, and suggest the key role of network directionality in analyzing diffusive dynamics on real-world systems which oftentimes are structurally directed and multilayered.

\begin{figure}[!htp]%
	\centering
	\includegraphics[width=\textwidth]{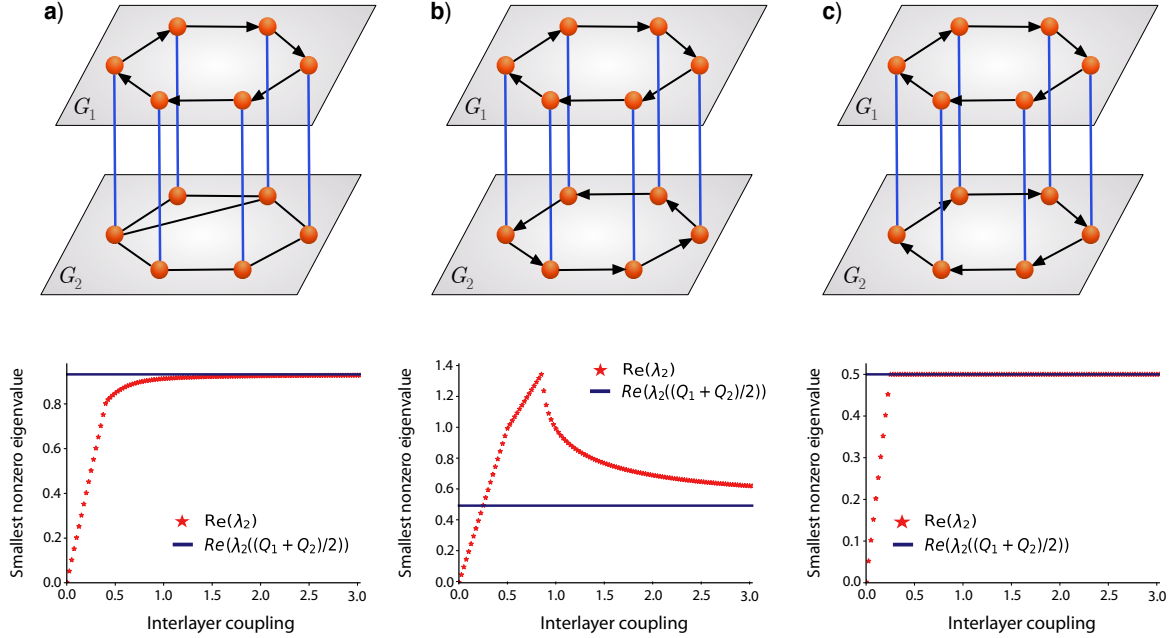}
	\caption{Superdiffusive and non-superdiffusive regimes on multiplex networks with at least one directed layers. Panel (a) shows a multiplex network consisting of a directed layer and an undirected layer displaying a non-superdiffusive regime $( \text{Re}\lambda_{2}\left(Q\right) < \lambda_{2}\left(\frac{Q_1+Q_2}{2}\right) )$.  Panel (b) shows a multiplex network with coupled layers in a reversed direction, exhibiting a superdiffusion $( \text{Re}\lambda_{2}\left(Q\right) > \lambda_{2}\left(\frac{Q_1+Q_2}{2}\right) )$ after the transition from a linear growth of $ 2p $. Panel (c) shows a multiplex with coupled layers in the same direction, exhibiting a non-superdiffusive regime $( \text{Re}\lambda_{2}\left(Q\right) =\lambda_{2}\left(\frac{Q_1+Q_2}{2}\right)) $. }
	\label{fig:toy_model_super_sub_diffusion}
\end{figure}
\begin{figure}[!htp]%
\centering
	\includegraphics[width=\textwidth]{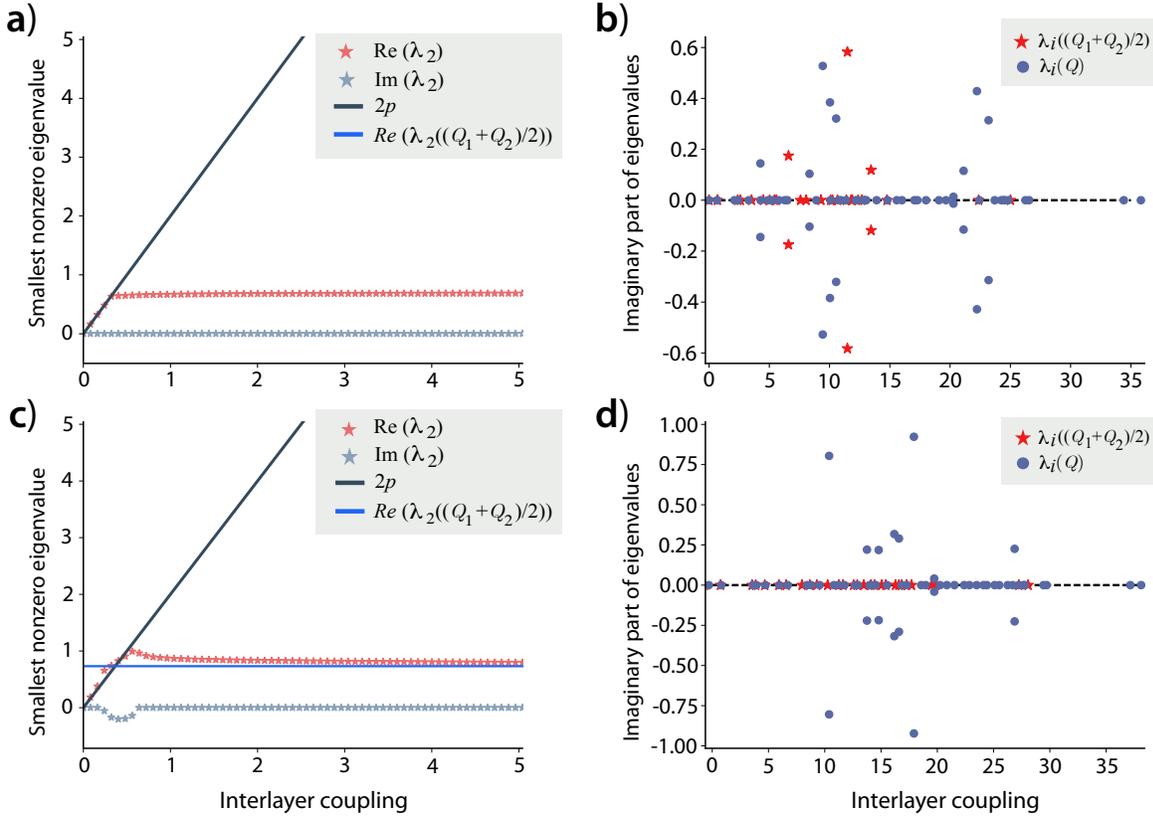}
\caption{Non-superdiffusive and superdiffusive real-world multiplex networks. Panel (a) shows a non-superdiffusive regime on the \emph{Vickers-Chan multiplex} social networks \cite{vickers1981representing} with directed layers. Panel (b) shows a superdiffusion on the multilayer network where one layer is from \emph{Vickers-Chan multiplex} and the other layer is the reverse of the first layer. Panels (b) and (d) show the distribution of all eigenvalues of the Laplacian $ Q $ for the overall multiplex and that of the integrated system $ \left(Q_1+Q_2\right)/2 $ corresponding to networks in panels (a) and (c), respectively.}
\label{fig:diffusion_realNets}
\end{figure}
\begin{figure}[!htp]
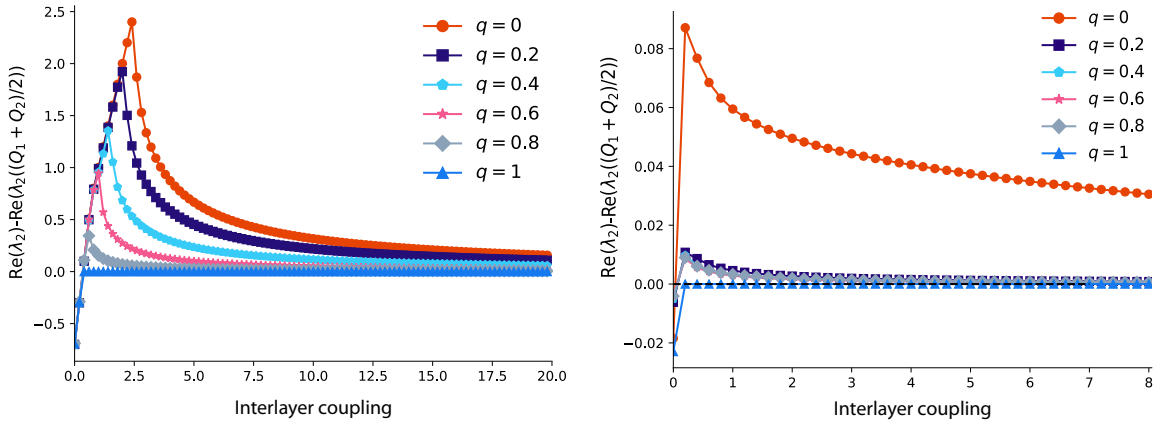

	\centering
	\begin{subfigure}[b]{0.495\textwidth}
		\includegraphics[width=\textwidth]{fig3a.pdf}
	\end{subfigure}
	\begin{subfigure}[b]{0.495\textwidth}
		\includegraphics[width=\textwidth]{fig3b.pdf}
	\end{subfigure}
	\caption{The difference $  \text{Re}\lambda_2(Q) - \lambda_2\left(\frac{Q_1+Q_2}{2}\right) $ as a function of the interlayer coupling strength $ p $ for (a) a multilayer network consisting of directed circulant graph of $ 100 $ nodes and average degree $ 4 $ and the constructed layer by the proposed model; and (b) a multilayer network consisting of a genetic and protein interactions network of the Epstein–Barr virus network and the constructed layer.}
	\label{fig:lower_bound_direction_overlap}
\end{figure}
\begin{figure}[!htp]
	\centering
	\includegraphics[width=0.6\textwidth]{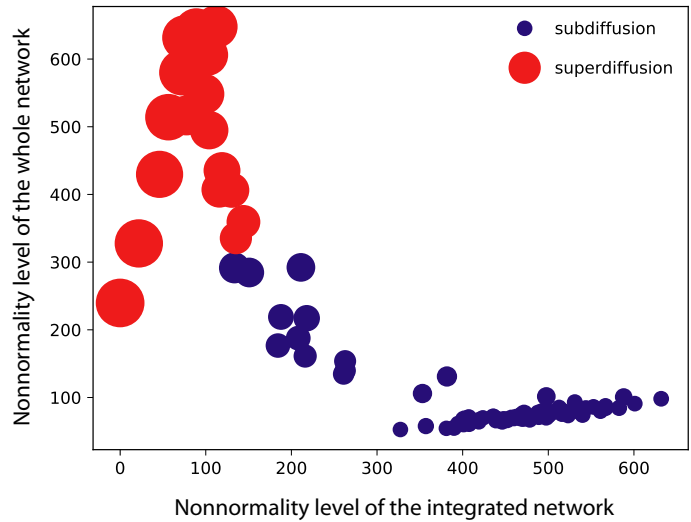}
	\caption{The normality condition is in line with the condition of sufficient directionality for the occurrence of superdiffusion. The size of each circle is proportional to the value of network directionality index, NDI. The simulation is performed on a two-layered network of $ N = 200 $ nodes wherein one layer consists of a fully connected, undirected graph and the other layer consists of a directed graph whose NDI is modified progressively. The intercoupling weight is $ p=200 $. A high level of NDI is translated to a low deviation from normality of the integrated network and a high deviation of normality of the multiplex network as a whole.}
	\label{fig:NDI_nonnormality}
\end{figure}

\appendix
\section{Convergence to the fixed point}\label{appendix:convergence}
The convergence to the fixed point is characterized by $ \Delta x(t)^T=(x(t)-x^*)^T $. The rate of convergence is regulated by $ \dot{\Delta x}(t)^T = \frac{d(x(t)-x^*)^T}{dt}=-(\Delta x(t)^T+x^{*T})Q=-\Delta x(t)^TQ $. The norm $ \lVert \Delta x(t)^T \rVert $ of the difference vector is bounded by
\begin{equation}
\lVert \Delta x(t)^T \rVert = \lVert \Delta x(0)^T e^{-Qt} \rVert  \leq \lVert \Delta x(0)^T \rVert \lVert e^{-Qt} \rVert 
\end{equation}
If one mainly concerns with the convergence behavior for the decrease of $ \lVert \Delta x(t) \rVert  $, the following holds for a bounded operator $ Q $ in a Hilbert space
\begin{equation}
\lim_{t \rightarrow \infty}\frac{\log \lVert e^{-Qt} \rVert}{t} =-\lambda_{\min}
\end{equation}
where $ \lambda_{\min} $ is the minimum eigenvalue of the Laplacian $ Q $. For a connected directed graph, $ \lambda_{\min} $ equals to zero which means a guaranteed convergence to the steady state $ x^* $ calculated by Eq. (\ref{eq:steady_state}). 
\section{Proof that network directionality index is normalized.}\label{appendix:NDI}
In the appendix, we prove that the metric quantifying the level of directionality is normalized to $ 1 $. Rewrite the definition \cite{tejedor2018diffusion} of Network Directionality Index (NDI) as
\begin{equation}
\text{NDI} = \frac{<\Delta d_{ij}>}{<d_{ij}>}
\end{equation}
or equivalently,
\begin{equation}\label{eq_def_NDI}
\text{NDI} = \frac{\sum_{i = 1}^{N}\sum_{j > i}\Delta d_{ij}}{\sum_{i = 1}^{N}\sum_{j > i }\frac{ d_{i \rightarrow j } + d_{j \rightarrow i}}{2}}
\end{equation}
where $ \Delta d_{ij} =  |d_{i \rightarrow j } - d_{j \rightarrow i}| $.
\begin{theorem}
	\label{theorem:NDI_simplification}
	For a directed cycle network consisting of $ N $ nodes and $ N $ directed links in a clockwise or counterclockwise direction, the $ \text{NDI} $ can be simplified as
	\begin{equation}
	\text{NDI} = \frac{2\sum_{j =1}^{N}|d_{i \rightarrow j } - d_{j \rightarrow i}|}{\sum_{j =1 }^{N}\left(d_{i \rightarrow j } + d_{j \rightarrow i}\right)}
	\end{equation}
	where $ \sum_{j =1}^{N}|d_{i \rightarrow j } - d_{j \rightarrow i}|/N $ is the average path asymmetry and $ \sum_{j =1 }^{N}\left(d_{i \rightarrow j } + d_{j \rightarrow i}\right)/2N $ is the average path length starting from an arbitrary node $ i $ and arriving at each of the remaining $ N -1 $ nodes.
\end{theorem}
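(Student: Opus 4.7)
The plan hinges on the rotational symmetry of the directed cycle. First I would observe that the map $i \mapsto i+1 \pmod N$ is a graph automorphism, so the directed shortest path length $d_{i\to j}$ depends only on the residue $j-i \pmod N$. Consequently, the per-vertex quantities
\[
S_i \coloneqq \sum_{j=1}^{N}|d_{i\to j} - d_{j\to i}|, \qquad U_i \coloneqq \sum_{j=1}^{N}\left(d_{i\to j} + d_{j\to i}\right)
\]
are both independent of $i$; call their common values $S$ and $U$. The $j=i$ term is harmless because $d_{i\to i}=0$, so extending the inner range of summation from $j \neq i$ to $j=1,\dots,N$ changes nothing and aligns the per-vertex sum with the form in the statement.

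Next I would rewrite the double sums appearing in definition (\ref{eq_def_NDI}) as ordered sums over pairs $(i,j)$ with $i \neq j$. Since both summands $|d_{i\to j}-d_{j\to i}|$ and $d_{i\to j}+d_{j\to i}$ are symmetric under swapping $i$ and $j$, the unordered-pair sum ($j>i$) is exactly half of the corresponding ordered-pair sum. Combined with vertex-transitivity this gives
\[
\sum_{i=1}^N\sum_{j>i}|d_{i\to j}-d_{j\to i}|=\tfrac{1}{2}\sum_{i=1}^N S_i = \tfrac{N}{2}S,
\]
and the same manipulation turns the denominator of NDI into $\tfrac{N}{4}U$ (the extra $\tfrac{1}{2}$ coming from the explicit factor of $\tfrac{1}{2}$ appearing inside that denominator). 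Substituting these two identities back into (\ref{eq_def_NDI}), the factors of $N$ cancel and one obtains $\mathrm{NDI}=(NS/2)/(NU/4)=2S/U$, which is exactly the claimed formula.

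The main obstacle is purely bookkeeping: one must keep the $\tfrac{1}{2}$ produced by the unordered-to-ordered conversion distinct from the explicit $\tfrac{1}{2}$ in the original denominator of NDI, and one must justify pulling the sum over $i$ outside by appealing specifically to the cyclic automorphism. This symmetry is a property of the directed cycle alone and does not extend to general directed graphs, which is precisely why the simplification is restricted to the single-cycle case considered in the theorem.
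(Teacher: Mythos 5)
Your proof is correct and follows essentially the same route as the paper's: both convert the sum over unordered pairs $j>i$ to a full double sum using the symmetry of the summands, and then pull out a factor of $N$ by noting the inner sums are independent of $i$. The only cosmetic difference is that the paper justifies the denominator identity via the constant round-trip length $d_{i\to j}+d_{j\to i}=N$, whereas you invoke vertex-transitivity uniformly for both numerator and denominator; both are valid for the directed cycle.
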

\begin{proof}
	For a directed circulant graph of $ N $ nodes and $ N $ directed links in a clockwise direction, the shortest directed path from node $ i $ to node $ j $ and back to $ i $ forms a closed cycle of length $ N $, regardless of choices of pairs of nodes ($ i \neq j $). Thus, we have that 
	\begin{equation}\label{eq_num}
	\sum_{i = 1}^{N}\sum_{j > i } \frac{ d_{i \rightarrow j } + d_{j \rightarrow i}}{2} = \frac{N}{4} \sum_{j =1 }^{N}\left(d_{i \rightarrow j } + d_{j \rightarrow i}\right)
	\end{equation}
	Applying the symmetry of $ \Delta d_{ij} = \Delta d_{ji} $ into the numerator of (\ref{eq_def_NDI}), we have that
	\begin{equation}
	\sum_{i = 1}^{N}\sum_{j > i} \Delta d_{ij} = \frac{1}{2}\sum_{i = 1}^{N}\sum_{j =1}^{N} \Delta d_{ij} 
	\end{equation}
	For a directed cycle consisting of $ N $ nodes and $ N $ directed links in a clockwise direction, the sum $ \sum_{j =1}^{N} \Delta d_{ij} $ is the same for each node $ i $ and we arrive at
	\begin{equation}\label{eq_denom}
	\sum_{i = 1}^{N}\sum_{j = 1}^{N} \Delta d_{ij} = N \sum_{j =1}^{N} \Delta d_{ij}
	\end{equation}
	Substituting (\ref{eq_num}) and (\ref{eq_denom}) into the definition of $ \text{NDI} $ establishes Theorem \ref{theorem:NDI_simplification}.
\end{proof}
\begin{figure}[h]%
	\centering
	\includegraphics[width=0.4\textwidth]{figS1.pdf}%
	\caption{A toy structure to exemplify the simplification of NDI by Theorem \ref{theorem:NDI_simplification}.}%
	\label{fig_def_simplification}%
\end{figure}
\begin{theorem}
	\label{theorem:add_link}
	Given a directed cycle consisting of $ N $ nodes and $ N $ directed links in a clockwise direction, adding a single or multiple directed links between any pair of nodes $ i $ and $ j $ generates shortcuts either in forward ($ i \rightarrow j $) or backward ($ j \rightarrow i $) direction between that pair of nodes and thus forms $ k $ ($ k \geq 1 $) sub-cycles with each consisting of $ c_k  $ ($ c_k \leq N $) nodes and $ c_k $ links. The $ \text{NDI}^\prime $ after adding directed links is smaller than or equal to the $ \text{NDI }$ of the original directed cycle. 
\end{theorem}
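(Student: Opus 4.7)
The plan is to argue by induction on the number $m$ of added directed arcs, with the base case $m = 0$ immediate from Theorem \ref{theorem:NDI_simplification}. For the inductive step, I would recast the NDI as the weighted average
\[
\text{NDI}' = \frac{\sum_{\{i,j\}} L'_{ij} \, r'_{ij}}{\sum_{\{i,j\}} L'_{ij}},
\]
where $L'_{ij} = d'_{i \to j} + d'_{j \to i}$ is the length of the shortest directed cycle through the pair $(i,j)$ in the modified graph and $r'_{ij} = 2|d'_{i \to j} - d'_{j \to i}|/L'_{ij}$ is its pair-wise asymmetry ratio. The first key observation is that the original $N$-cycle survives intact, so $L'_{ij} \leq N$ for every pair.

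Next, I would partition pairs according to the shortest sub-cycle jointly containing them. For pairs both lying in a sub-cycle of length $c_k$, their combined contribution to the numerator and denominator matches the computation in Theorem \ref{theorem:NDI_simplification} applied to a pure $c_k$-cycle, yielding $(c_k-2)/(c_k-1)$ for even $c_k$ and $(c_k-1)/c_k$ for odd $c_k$. Both expressions are monotonically increasing in $c_k$, so each sub-cycle's intrinsic NDI is bounded above by $\text{NDI}$ since $c_k \leq N$. Pairs spanning multiple sub-cycles are bounded by the NDI of their minimal enclosing directed cycle, which likewise has length at most $N$.

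The main technical obstacle is that the partial NDI over an arbitrary subset of a cycle's pairs does not equal the full-cycle NDI: adjacent pairs contribute disproportionately to asymmetry while antipodal pairs contribute zero, so a subset of a cycle's pairs can exhibit a partial NDI exceeding the cycle's overall NDI. The argument therefore cannot proceed pair-by-pair but must exploit that every pair jointly belongs to some sub-cycle whose intrinsic NDI is bounded. I expect the proof either to perform a case analysis of shortcut configurations (forward versus backward arcs, and their relative positions on the $N$-cycle) or to invoke an ear-decomposition reducing to a sequence of single-arc insertions in which each step monotonically decreases the NDI ratio. The universal fallback $L'_{ij} \leq N$ supplied by the surviving $N$-cycle is the key structural invariant enabling the bound.
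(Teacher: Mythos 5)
Your proposal identifies the right structural objects (the surviving $N$-cycle, the sub-cycles of length $c_k \leq N$) but it stops short of a proof: you yourself flag that the pair-by-pair weighted-average decomposition breaks down because the partial NDI of a subset of a cycle's pairs need not be bounded by the full cycle's NDI, and the two escape routes you sketch (a case analysis of shortcut configurations, an ear decomposition) are left entirely undeveloped. That acknowledged obstacle is exactly where the argument needs to be closed, and your text does not close it.

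The idea the paper uses, and which is absent from your proposal, is an equal-decrement identity rather than a comparison of sub-cycle NDIs. For a pair $(i,j)$ whose shortest round trip moves from the $N$-cycle to a sub-cycle of length $c_k$, the denominator term drops from $d_{i\to j}+d_{j\to i}=N$ to $c_k$, a decrease of $N-c_k$; and since the shortcut shortens only one of the two directed distances, $\Delta d_{ij}=|d_{i\to j}-d_{j\to i}|=N-2\min\left(d_{i\to j},d_{j\to i}\right)$ decreases by the \emph{same} amount $N-c_k$. Summing over pairs, $\text{NDI}^\prime$ is obtained from $\text{NDI}$ by subtracting one and the same nonnegative quantity from the numerator and the denominator, and because $\text{NDI}<1$ for the original cycle, the elementary inequality $(a-c)/(b-c)\leq a/b$ for $0\leq a\leq b$ and $0\leq c<b$ gives the claim directly. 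This sidesteps your obstacle entirely: no statement about the intrinsic NDI of a sub-cycle, or of any subset of pairs, is ever needed. (One should still verify, as the paper implicitly assumes, that the added arc shortens travel in only one direction so that the minimum in $\Delta d_{ij}$ is unaffected; that is where the hypothesis that the shortcut closes a clockwise sub-cycle enters.)
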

\begin{proof}
	Adding a directed link connecting node $ j $ to $ i $ forms a sub-cycle $ k $ consisting of $ c_k $ nodes and $ c_k $ directed links in a clockwise direction, as shown in Figure \ref{fig_add_link}. 
	\begin{figure}[h]%
		\centering
		\includegraphics[width=0.6\textwidth]{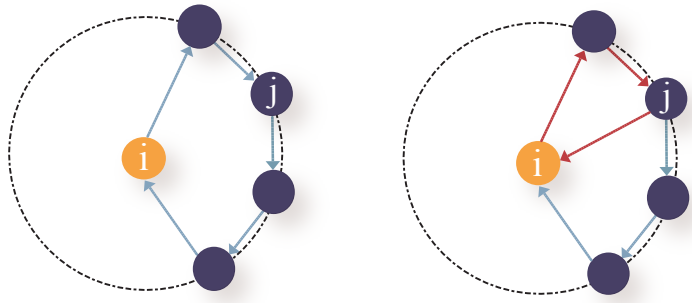}
		\caption{Adding a directed link $ j \rightarrow i $ forms a sub-cycle consisting of $ c_k $($ =3 $) nodes and $ c_k $ directed links in a clockwise direction.}%
		\label{fig_add_link}%
	\end{figure}
	For a pair of nodes $ i $ and $ j $ in the sub-cycle $ k $, the difference of path asymmetry $ \Delta d_{ij} $ before and after adding directed links reads 
	\begin{equation}
	\Delta d_{ij}^\prime -  \Delta d_{ij} = N-2\min\left(d_{i \rightarrow j},\ d_{j \rightarrow i}\right)
	- \left(c_k-2\min\left(d_{i \rightarrow j},\ d_{j \rightarrow i}\right)\right) = N-c_k
	\end{equation}
	and the difference of $ d_{i \rightarrow j } + d_{j \rightarrow i}  $ before and after link addition reads 
	\begin{equation}
	d_{i \rightarrow j }^\prime + d_{j \rightarrow i}^\prime - \left(d_{i \rightarrow j } + d_{j \rightarrow i}\right) = N - c_k
	\end{equation}
	Summing up all the node pairs starting from node $ i $ to the remaining $ N-1 $ nodes yields
	\begin{equation}
	\text{NDI}^\prime = \frac{2\sum_{j =1}^{N}\left(|d_{i \rightarrow j } - d_{j \rightarrow i}| - (N-\left(m\right)_j )\right)}{\sum_{j = 1}^{N}\left(\left(d_{i \rightarrow j } + d_{j \rightarrow i}\right)  -  (N-\left(m\right)_j ) \right)}
	\end{equation}
	where $ m $ is a column vector of length $ N $ with element $ \left(m\right)_j = c_k $ if node $ j $ belongs to a sub-cycle of $ c_k $ nodes and $ c_k $ directed links in a clockwise direction, which can be further simplified as 
	\begin{equation}
	\text{NDI}^\prime = \frac{2\sum_{j =1}^{N} |d_{i \rightarrow j } - d_{j \rightarrow i}| -\sum_{j}\left(N-\left(m\right)_j \right)}{\sum_{j = 1}^{N}\left(d_{i \rightarrow j } + d_{j \rightarrow i}\right)  -  \sum_{j}\left(N-\left(m\right)_j \right)}
	\end{equation}
	It has been shown \cite{tejedor2018diffusion} that a directed cycle of $ N $ nodes has $ \text{NDI} <1$.  Subtracting the same number $ \sum_{j}\left(N-\left(m\right)_j \right) $ both in the numerator and denominator of (\ref{eq_def_NDI}) does not increase the $ \text{NDI} $ of the original cycle consisting of $ N $ nodes.
\end{proof}
Since a directed cycle of $ N $ nodes has $ \text{NDI} \leq 1$ (Theorem \ref{theorem:NDI_simplification}) and adding directed links does not increase the $ \text{NDI} $ (Theorem \ref{theorem:add_link}), we prove that $ \text{NDI} \leq 1 $.

\section{Proof of Eq. (\ref{eq:min_normality_integrated})}\label{appendix:minnormality}
Given the equality of the Hermitian part of the coupled directed layers, we show that the close to normal condition of the integrated system is translated to the minimization of Eq. $\left(\ref{eq:min_normality_integrated}\right)$ in the main text. The definition of normality of the integrated system reads
$ \left(Q_1+Q_2\right)\left(Q_1+Q_2\right)^T = \left(Q_1+Q_2\right)^T\left(Q_1+Q_2\right) $, which can be rewritten as
\begin{equation}\label{key}
Q_1Q_1^T-Q_1^TQ_1 + Q_2Q_2^T-Q_2^TQ_2 = Q_2^TQ_1-Q_1Q_2^T+Q_1^TQ_2-Q_2Q_1^T
\end{equation}
Left multiplying $ Q_1 $ of the equality of $ \text{Re}{Q_1}=\text{Re}{Q_2} $ yields
\begin{equation}\label{eq:a}
 Q_1Q_1 + Q_1Q_1^T = Q_1Q_2 + Q_1Q_2^T
\end{equation}
Additionally, right multiplying $ Q_1 $ yields
\begin{equation}\label{eq:b}
Q_1Q_1 + Q_1^TQ_1 = Q_2Q_1 + Q_2^TQ_1
\end{equation}
Subtracting Eq. (\ref{eq:b}) from Eq. (\ref{eq:a}) results in
\begin{equation}\label{eq:c}
Q_1Q_1^T - Q_1^TQ_1  = Q_1Q_2 -Q_2Q_1 + Q_1Q_2^T-Q_2^TQ_1
\end{equation}
Analogously, left and right multiplying $ Q_2 $ establishes
\begin{equation}\label{eq:d}
Q_2Q_2^T - Q_2^TQ_2 = Q_2Q_1 - Q_1Q_2  + Q_2Q_1^T-Q_1^TQ_2
\end{equation} 
Substituting the Eq. (\ref{eq:c}-\ref{eq:d}) into the definition of normality leads to
\begin{equation}
Q_1Q_2^T-Q_2^TQ_1+Q_2Q_1^T-Q_1^TQ_2 = 0
\end{equation}
from which we have that $ Q_1Q_2^T-Q_2^TQ_1 $ is a skew symmetric matrix and the real part of all eigenvalues are $ 0 $. Minimizing the normality level of the integrated system can be quantified and analyzed by the minimization of Eq. (\ref{eq:min_normality_integrated}).

\section*{Acknowledgments}
X.W. acknowledges the project 62003156 supported by NSFC and project ``PCL Future Greater-Bay Area Network Facilities for Large-scale Experiments and Applications (LZC0019)". A.T. acknowledges partial support from NSF Grants EAR‐181190 and the UK Research and Innovation Global Challenges Research Fund Living Deltas Hub Grant NES0089261. Y.M. acknowledges partial support from the Government of Arag\'on, Spain through a grant to the group FENOL (E36-20R), by MINECO and FEDER funds (grant FIS2017-87519-P) and by Intesa Sanpaolo Innovation Center. The funders had no role in study design, data collection and analysis, or preparation of the manuscript.

\bibliographystyle{unsrt}
%\bibliography{Bib/biblio}

\begin{thebibliography}{10}

\bibitem{kivela2014multilayer}
Mikko Kivel{\"a}, Alex Arenas, Marc Barthelemy, James~P Gleeson, Yamir Moreno,
  and Mason~A Porter.
\newblock Multilayer networks.
\newblock {\em Journal of complex networks}, 2(3):203--271, 2014.

\bibitem{aleta2019multilayer}
Alberto Aleta and Yamir Moreno.
\newblock Multilayer networks in a nutshell.
\newblock {\em Annual Review of Condensed Matter Physics}, 10:45--62, 2019.

\bibitem{Myers2012}
Seth~A. Myers, Chenguang Zhu, and Jure Leskovec.
\newblock Information diffusion and external influence in networks.
\newblock In {\em Proceedings of the 18th ACM SIGKDD International Conference
  on Knowledge Discovery and Data Mining}, KDD '12, pages 33--41, New York, NY,
  USA, 2012. ACM.

\bibitem{vicsek1995novel}
Tam{\'a}s Vicsek, Andr{\'a}s Czir{\'o}k, Eshel Ben-Jacob, Inon Cohen, and Ofer
  Shochet.
\newblock Novel type of phase transition in a system of self-driven particles.
\newblock {\em Physical review letters}, 75(6):1226, 1995.

\bibitem{berman2009optimized}
Spring Berman, {\'A}d{\'a}m Hal{\'a}sz, M~Ani Hsieh, and Vijay Kumar.
\newblock Optimized stochastic policies for task allocation in swarms of
  robots.
\newblock {\em IEEE Transactions on Robotics}, 25(4):927--937, 2009.

\bibitem{prorok2017impact}
Amanda Prorok, M~Ani Hsieh, and Vijay Kumar.
\newblock The impact of diversity on optimal control policies for heterogeneous
  robot swarms.
\newblock {\em IEEE Transactions on Robotics}, 33(2):346--358, 2017.

\bibitem{saber2003consensus}
Reza~Olfati Saber and Richard~M Murray.
\newblock Consensus protocols for networks of dynamic agents.
\newblock 2:951--956, 2003.

\bibitem{abdelnour2014network}
Farras Abdelnour, Henning~U Voss, and Ashish Raj.
\newblock Network diffusion accurately models the relationship between
  structural and functional brain connectivity networks.
\newblock {\em Neuroimage}, 90:335--347, 2014.

\bibitem{de2016physics}
Manlio De~Domenico, Clara Granell, Mason~A Porter, and Alex Arenas.
\newblock The physics of spreading processes in multilayer networks.
\newblock {\em Nature Physics}, 12(10):901--906, 2016.

\bibitem{de2018fundamentals}
Guilherme~Ferraz de~Arruda, Francisco~A Rodrigues, and Yamir Moreno.
\newblock Fundamentals of spreading processes in single and multilayer complex
  networks.
\newblock {\em Physics Reports}, 756:1--59, 2018.

\bibitem{sahneh2015exact}
Faryad~Darabi Sahneh, Caterina Scoglio, and Piet Van~Mieghem.
\newblock Exact coupling threshold for structural transition reveals
  diversified behaviors in interconnected networks.
\newblock {\em Physical Review E}, 92(4):040801, 2015.

\bibitem{del2016synchronization}
Charo~I del Genio, Jes{\'u}s G{\'o}mez-Garde{\~n}es, Ivan Bonamassa, and
  Stefano Boccaletti.
\newblock Synchronization in networks with multiple interaction layers.
\newblock {\em Science Advances}, 2(11):e1601679, 2016.

\bibitem{cozzo2019layer}
Emanuele Cozzo, Guilherme~Ferraz de~Arruda, Francisco~A Rodrigues, and Yamir
  Moreno.
\newblock Layer degradation triggers an abrupt structural transition in
  multiplex networks.
\newblock {\em Physical Review E}, 100(1):012313, 2019.

\bibitem{gomez2013diffusion}
Sergio Gomez, Albert Diaz-Guilera, Jesus Gomez-Gardenes, Conrad~J
  Perez-Vicente, Yamir Moreno, and Alex Arenas.
\newblock Diffusion dynamics on multiplex networks.
\newblock {\em Physical review letters}, 110(2):028701, 2013.

\bibitem{sole2013spectral}
Albert Sole-Ribalta, Manlio De~Domenico, Nikos~E Kouvaris, Albert Diaz-Guilera,
  Sergio Gomez, and Alex Arenas.
\newblock Spectral properties of the laplacian of multiplex networks.
\newblock {\em Physical Review E}, 88(3):032807, 2013.

\bibitem{tejedor2018diffusion}
Alejandro Tejedor, Anthony Longjas, Efi Foufoula-Georgiou, Tryphon~T Georgiou,
  and Yamir Moreno.
\newblock Diffusion dynamics and optimal coupling in multiplex networks with
  directed layers.
\newblock {\em Physical Review X}, 8(3):031071, 2018.

\bibitem{cencetti2019diffusive}
Giulia Cencetti and Federico Battiston.
\newblock Diffusive behavior of multiplex networks.
\newblock {\em New Journal of Physics}, 21(3):035006, 2019.

\bibitem{hecker2009gene}
Michael Hecker, Sandro Lambeck, Susanne Toepfer, Eugene Van~Someren, and
  Reinhard Guthke.
\newblock Gene regulatory network inference: data integration in dynamic
  models—a review.
\newblock {\em Biosystems}, 96(1):86--103, 2009.

\bibitem{wang2019directionality}
Xiangrong Wang, Alberto Aleta, Dan Lu, and Yamir Moreno.
\newblock Directionality reduces the impact of epidemics in multilayer
  networks.
\newblock {\em New Journal of Physics}, 21(9):093026, 2019.

\bibitem{zhang2018altering}
Xizhe Zhang.
\newblock Altering indispensable proteins in controlling directed human protein
  interaction network.
\newblock {\em IEEE/ACM Transactions on Computational Biology and
  Bioinformatics}, 15(6):2074--2078, 2018.

\bibitem{wang2019structural}
Xiangrong Wang, Robert~E Kooij, Yamir Moreno, and Piet Van~Mieghem.
\newblock Structural transition in interdependent networks with regular
  interconnections.
\newblock {\em Physical Review E}, 99(1):012311, 2019.

\bibitem{cauchy1829equationa}
Augustin-Louis Cauchy.
\newblock Sur l’{\'e}quationa l’aide de laquelle on d{\'e}termine les
  in{\'e}galit{\'e}s s{\'e}culaires des mouvements des planetes.
\newblock {\em Exer. de math}, 4(1829):174--195, 1829.

\bibitem{thompson1968principal}
R~C Thompson.
\newblock Principal submatrices v: Some results concerning principal
  submatrices of arbitrary matrices.
\newblock {\em J. Res. Nat. Bur. Standards Sect. B}, 72(2):115--125, 1968.

\bibitem{sherman2013principally}
Michael~D Sherman and Ronald~L Smith.
\newblock Principally normal matrices.
\newblock {\em Linear Algebra and Its Applications}, 438(5):2617--2627, 2013.

\bibitem{fan1957imbedding}
Ky~Fan and Gordon Pall.
\newblock Imbedding conditions for hermitian and normal matrices.
\newblock {\em Canadian Journal of Mathematics}, 9:298--304, 1957.

\bibitem{Fan1950On}
Ky~Fan.
\newblock On a theorem of weyl concerning eigenvalues of linear
  transformations.
\newblock {\em Proceedings of the National Academy of Sciences}, 36(1):31--35,
  1950.

\bibitem{hiriart2007potpourri}
Jean-Baptiste Hiriart-Urruty.
\newblock Potpourri of conjectures and open questions in nonlinear analysis and
  optimization.
\newblock {\em SIAM review}, 49(2):255--273, 2007.

\bibitem{jiang2016simultaneous}
Rujun Jiang and Duan Li.
\newblock Simultaneous diagonalization of matrices and its applications in
  quadratically constrained quadratic programming.
\newblock {\em SIAM Journal on Optimization}, 26(3):1649--1668, 2016.

\bibitem{stark2006biogrid}
Chris Stark, Bobby-Joe Breitkreutz, Teresa Reguly, Lorrie Boucher, Ashton
  Breitkreutz, and Mike Tyers.
\newblock {BioGRID}: a general repository for interaction datasets.
\newblock {\em Nucleic acids research}, 34(1):D535--D539, 2006.

\bibitem{de2015muxviz}
Manlio De~Domenico, Mason~A Porter, and Alex Arenas.
\newblock {MuxViz}: a tool for multilayer analysis and visualization of
  networks.
\newblock {\em Journal of Complex Networks}, 3(2):159--176, 2015.

\bibitem{vickers1981representing}
M~Vickers and S~Chan.
\newblock Representing classroom social structure.
\newblock {\em Victoria Institute of Secondary Education, Melbourne}, 1981.

\end{thebibliography}

\end{document}